\newtheorem{lemma}{\textbf{Lemma}}
\newtheorem{theorem}{\textbf{Theorem}}
\newtheorem{corollary}{\textbf{Corollary}}
\newenvironment{proof}{{\noindent\it Proof:}}{\hfill $\square$}
\begin{document}
\title{A Balanced Tree Approach to Construction of Length-Flexible Polar Codes}
\author{Xinyuanmeng~Yao and Xiao~Ma~\IEEEmembership{Member,~IEEE}
\thanks{The authors are with the School of Computer Science and Engineering, and also with Guangdong Province Key Laboratory of Information Security Technology, Sun Yat-sen University, Guangzhou 510006, China 
(e-mail: yaoxym@mail2.sysu.edu.cn, maxiao@mail.sysu.edu.cn).}}
\maketitle
\begin{abstract}
From the perspective of tree, we design a length-flexible coding scheme. For an arbitrary code length, we first construct a balanced binary tree~(BBT) where the root node represents a transmitted codeword, the leaf nodes represent either active bits or frozen bits, and a parent node is related to its child nodes by a length-adaptive $(U+V\mid V)$ operation. Both the encoding and the successive cancellation~(SC)-based decoding can be implemented over the constructed coding tree. For code construction, we propose a signal-to-noise ratio~(SNR)-dependent method and two SNR-independent methods, all of which evaluate the reliabilities of leaf nodes and then select the most reliable leaf nodes as the active nodes. Numerical results demonstrate that our proposed codes can have comparable performance to the 5G polar codes. To reduce the decoding latency, we propose a partitioned successive cancellation~(PSC)-based decoding algorithm, which can be implemented over a sub-tree obtained by pruning the coding tree. Numerical results show that the PSC-based decoding can achieve similar performance to the conventional SC-based decoding. 
\end{abstract}

\begin{IEEEkeywords}
Code construction, coding tree, length-flexible coding, partitioned successive cancellation-based decoding, polar coding,
\end{IEEEkeywords}
\section{Introduction}
As a class of error-correcting codes which can provably achieve capacity for binary-input discrete memoryless channels~(B-DMCs) with low encoding and decoding complexity~\cite{Arikan2009}, polar codes have attracted a lot of attention in the past decade. Recently, they have been adopted as the channel coding scheme for the control channel in the Enhanced Mobile Broadband scenario in the fifth generation~(5G) wireless communications standard~\cite{3GPP38212}. For polar coding, one of the most important techniques is channel polarization. Based on a mapping from two independent and identical B-DMCs to a ``worse'' channel and a ``better'' channel, $2^n$ copies of an original B-DMC are transformed into $2^n$ bit-channels with different reliabilities. Then, the most reliable polarized channels are chosen to transmit data bits. In the finite-length regime, polar codes with the successive cancellation list~(SCL) decoder or the cyclic redundancy check-aided SCL~(CA-SCL) decoder can have satisfactory performance. 

From the definition of polar codes, we see that the code rates of polar codes are flexible. However, the lengths of polar codes are limited to powers of two, which are not convenient for practical applications. To construct length-flexible polar codes, one can use puncturing, shortening or extending. The punctured and shortened~(P/S) polar codes~\cite{Eslami2011,Niu2013,Shin2013,Wang2014,Miloslavskaya2015,Jang2019,Zhao2021,Han2022} are obtained by deleting some coded bits of their mother codes, while the extended polar codes~\cite{Chen2013,Saber2015,Ma2017,Zhao2018,Jang2020} are obtained by padding some bits to slightly shorter polar codes, which are usually designed for hybrid automatic repeat-and-request. It has been shown in the existing literature that the P/S polar codes can have good performance. However, the decoding of the P/S polar codes is based on their mother codes, which results in an increased decoding latency. For some parameter configurations~(for example, the cases when the code length is slightly larger than a power of two and the code rate is low), the extended polar codes might be a better choice in terms of performance and complexity. Recently, for 5G applications, a well designed low-complexity rate matching scheme has been proposed in~\cite{3GPP38212}, which incorporates puncturing, shortening and repetition~(the simplest way of extending) techniques. In fact, one can also construct length-flexible polar codes with decoding latency depending only on the real code lengths~(in contrast to the P/S polar codes depending on their mother codes' lengths). Multi-kernel polar coding~\cite{Bioglio2020} is one of such techniques, which mixes kernels of different sizes. Notice that any positive integer can be decomposed into a sum of powers of two, chained polar subcodes have been proposed in~\cite{Trifonov2018}. Another technique is asymmetric polar coding~\cite{Cavatassi2019}, which links polar codes of unequal lengths.

Notice that the conventional polar coding~\cite{Arikan2009} can be implemented over a full binary tree where the root node represents a transmitted codeword, each of the leaf nodes represents either an active bit or a frozen bit and the relationship between a parent node and its child nodes is specified by the $(U+V\mid V)$ construction. Motivated by this idea, we propose to construct a balanced binary tree~(BBT) for any code length, referred to as a coding tree, and design a length-flexible coding scheme by implementing a length-adaptive $(U+V\mid V)$ operation over the coding tree. For BBT polar code construction, we turn to density evolution with Gaussian approximation~(GA), termed as GA construction. We also present two signal-to-noise ratio~(SNR)-independent methods for practical use. Similar to~\cite{Niu2019,Niu2020,Niu2021}, we assign to each node a polar subcode and present a recursive algorithm to approximately calculate the minimum Hamming weight~(MHW) and the multiplicity of the MHW codewords for a polar subcode. Then, a theoretically explainable MHW construction is proposed. In addition, based on~\cite{He2017}, we define a polarization weight~(PW) for the leaf nodes in the coding tree and propose a heuristic PW construction. 

From the viewpoint of coding tree, the SC decoder for~(BBT) polar codes makes hard decisions at leaf nodes in a serial manner, which results in high decoding latency when the code length is long. For the conventional polar codes, the existing works~\cite{Alamdar-Yazdi2011,Sarkis2014,Hanif2017,Hashemi2016,Hashemi2017,Hanif2018,Condo2018} identified different types of internal nodes according to the pattern of active and frozen bits and then designed the corresponding multi-bits parallel decoders. For example, Alamdar-Yazdi~\emph{et al.} proposed to make hard decisions at rate-0 and rate-1 codes~\cite{Alamdar-Yazdi2011}, Sarkis~\emph{et al.} further proposed to make hard decisions at single-parity-check and repetition codes~\cite{Sarkis2014}, and Hanif~\emph{et al.} proposed to make hard decisions at five types of codes~(Type-\uppercase\expandafter{\romannumeral1}, Type-\uppercase\expandafter{\romannumeral2}, Type-\uppercase\expandafter{\romannumeral3}, Type-\uppercase\expandafter{\romannumeral4} and Type-\uppercase\expandafter{\romannumeral5} codes)~\cite{Hanif2017}. Recently, for the 5G polar codes involving parity-check~(PC) bits, Zhou~\emph{et al.} identified various types of special nodes which contain PC bits and proposed the corresponding fast decoders~\cite{Zhou2023}. In this work, we present a partitioned successive cancellation~(PSC) decoder. First, a decoding sub-tree is extracted from the coding tree according to a preset threshold on the dimensions of the nodes and then the PSC decoder is implemented over the decoding sub-tree. To analyze the error performance of the BBT polar codes with the PSC decoding, we derive three bounds on the frame error rate~(FER). Furthermore, the PSC decoder can be adapted to the partitioned successive cancellation list~(PSCL) decoder or CA-PSCL decoder for the BBT polar codes.

The rest of this paper is organized as follows. We introduce the coding tree and present our BBT polar coding scheme in Section~\ref{section2}. Three BBT polar code constructions and their performance are presented in Section~\ref{section3}. In Section~\ref{section4}, the proposed PSC-based decoding is described and evaluated in terms of error-correction performance and decoding latency. We conclude this paper in Section~\ref{section5}.

\section{The Length-Flexible Coding Scheme}\label{section2}
In this section, we first introduce a coding tree and then present a coding scheme which is applicable to arbitrary code length.
\subsection{Coding Tree}\label{section2subsection1}
The dynamic behavior of the encoder and the decoder in our scheme can be graphically represented by a rooted BBT, called a coding tree, as defined below. Given a code length $N$, the root node is labeled by a codeword $\boldsymbol{c}\in\mathbb{F}_2^N$~(the $N$-dimensional vector space over the binary field $\mathbb{F}_2$). Each node labeled by a vector $\boldsymbol{v}\in\mathbb{F}_2^{\ell}$, where $2\leq \ell\leq N$, has a left child node labeled by $\boldsymbol{v}_{l}\in\mathbb{F}_2^{\lceil \ell/2\rceil}$ and a right child node labeled by $\boldsymbol{v}_{r}\in\mathbb{F}_2^{\lfloor \ell/2\rfloor}$, where $\lceil\cdot\rceil$ and $\lfloor\cdot\rfloor$ denote the ceiling function and the floor function, respectively. For convenience, we say that a node labeled with $\boldsymbol{v}\in\mathbb{F}_2^{\ell}$ has a length $\ell$. If the lengths of two children are equal, i.e., $\lceil \ell/2\rceil=\lfloor \ell/2\rfloor$, define as usual 
\begin{align}\label{Equation1}
\boldsymbol{v}_{l}\oplus \boldsymbol{v}_{r}\triangleq(v_{l,0}+v_{r,0},\ldots,v_{l,\lceil \ell/2\rceil-1}+v_{r,\lfloor \ell/2\rfloor-1}),
\end{align}
and if the length of the left child is one greater than that of the right child, i.e., $\lceil \ell/2\rceil=\lfloor \ell/2\rfloor+1$, define
\begin{align}\label{Equation2}
\boldsymbol{v}_{l}\oplus \boldsymbol{v}_{r}\triangleq(v_{l,0}+v_{r,0},\ldots,v_{l,\lceil  \ell/2\rceil-2}+v_{r,\lfloor \ell/2\rfloor-1},v_{l,\lceil  \ell/2\rceil-1}),
\end{align}
where ``$+$'' in the right hand sides of~(\ref{Equation1}) and~(\ref{Equation2}) denote the addition over $\mathbb{F}_2$. Then, the relationship of the parent node and its children is given by
\begin{equation}\label{Equation3}
\boldsymbol{v}=(\boldsymbol{v}_{l}\oplus \boldsymbol{v}_{r},\boldsymbol{v}_{r}).
\end{equation}
In words, in the case when $\lceil \ell/2\rceil=\lfloor \ell/2\rfloor$, (\ref{Equation3}) is exactly the same as a polar transformation, while in the case when $\lceil \ell/2\rceil=\lfloor \ell/2\rfloor+1$, (\ref{Equation3}) is equivalent to extending the right child $\boldsymbol{v}_{r}$ by one zero for vector addition and then shortening it from $\boldsymbol{v}$ and $\boldsymbol{v}_{r}$. See Fig.~\ref{vector_addition} for reference. 

\begin{figure}[t]  
	\centering
	\subfloat[$\lceil\ell/2\rceil = \lfloor \ell/2\rfloor$]{
		\includegraphics[width=0.5\textwidth]{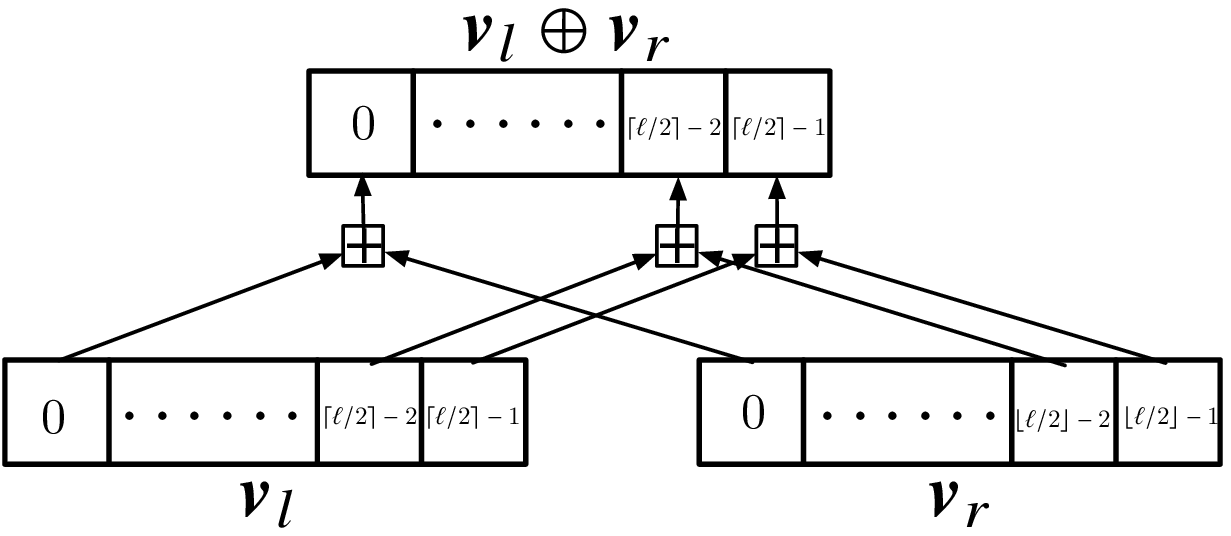}
	}
	\subfloat[$\lceil\ell/2\rceil = \lfloor \ell/2\rfloor+1$]{
		\includegraphics[width=0.5\textwidth]{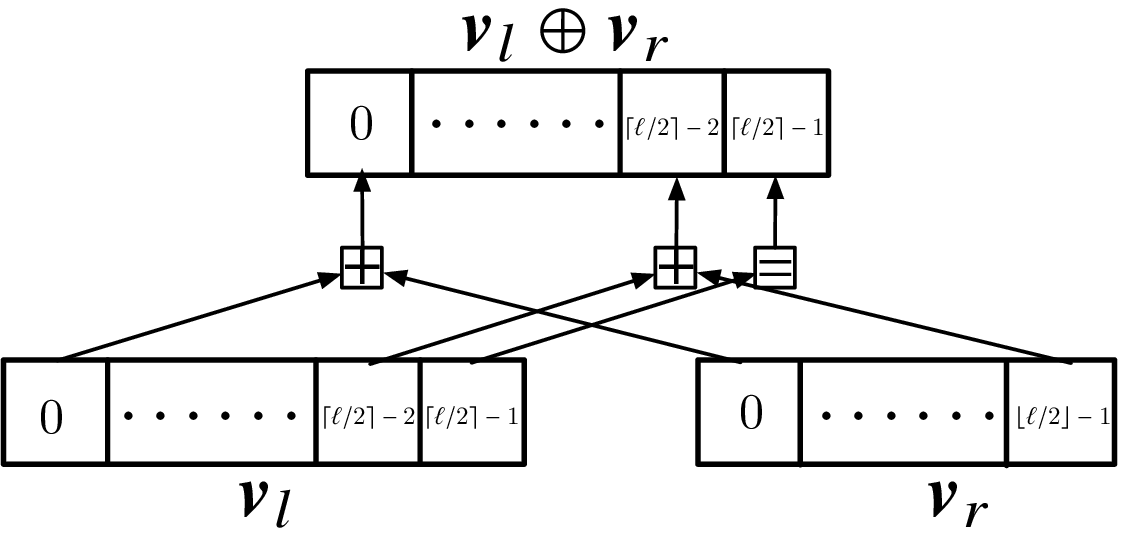}
	}
	\caption{Illustration of the definition of ``$\boldsymbol{v}_{l}\oplus \boldsymbol{v}_{r}$'', where the lengths of $\boldsymbol{v}_{l}\oplus \boldsymbol{v}_{r}$, $\boldsymbol{v}_{l}$ and $\boldsymbol{v}_{r}$ are $\lceil \ell/2 \rceil$, $\lceil \ell/2 \rceil$ and $\lfloor \ell/2 \rfloor$, respectively. }\label{vector_addition}
\end{figure}

\begin{figure}[t]
	\centering
	\includegraphics[width=0.35\textwidth]{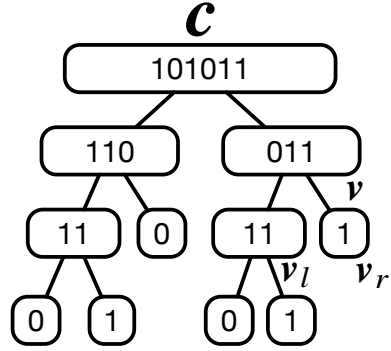}
	\caption{The coding tree for $N=6$, where the labels of the leaf nodes from left to right are $0,1,0,0,1,1$ and the codeword of the root node is $101011$. }\label{coding_tree}
\end{figure}

With the above definition, for any code length $N$, we can first construct the corresponding coding tree from the root by splitting each node with length $\ell\geq 2$ into two children. The code tree has $\lceil \log_2 N\rceil$ levels and $N$ leaves. Then, given $N$ labels of the leaf nodes, we can calculate the labels of all internal nodes recursively from the leaves to the root. An example of a coding tree for code length $N=6$ is shown in Fig.~\ref{coding_tree}. The tree has three levels in total. Two leaf nodes are at the second-to-last level and four leaf nodes are at the last level. The labels of the leaf nodes from left to right are $0,1,0,0,1,1$ and the codeword of the root node is $101011$.

\subsection{Encoding and Decoding}\label{section2subsection2}
Denote by $K$ and $N$ the code dimension and the code length, respectively. The code rate is $R\triangleq K/N$. To encode a data sequence $\boldsymbol{u}\in\mathbb{F}_2^K$, we first construct the unlabeled coding tree for $N$ according to Section~\ref{section2subsection1} and distinguish the $N$ leaf nodes by $K$ active nodes and $N-K$ frozen nodes. Then the $K$ active nodes are labeled by the bits of $\boldsymbol{u}$ and the $N-K$ frozen nodes are labeled by zero. The choice of the active nodes will be discussed later. The encoder computes the labels of the non-leaf nodes in a reverse level-order traversal of the tree, resulting in a transmitted codeword $\boldsymbol{c}\in\mathbb{F}_2^N$ at the root node. 

The decoding can also be implemented over the coding tree. In this situation, each tree node is associated with an LLR vector and an HBE vector. Then, following three basic rules, the decoder computes the LLR vectors of the nodes in a pre-order traversal and computes the HBE vectors of the nodes in a post-order traversal, as detailed below. For an internal node  of length $\ell$ with LLRs $\boldsymbol{\alpha}\triangleq(\alpha_0,\alpha_1,\ldots,\alpha_{\ell-1})\in\mathbb{R}^{\ell}$ and HBEs $\boldsymbol{\beta}\triangleq(\beta_0,\beta_1,\ldots,\beta_{\ell-1})\in\mathbb{F}_2^{\ell}$, denote by $\boldsymbol{\alpha}_{l}\triangleq(\alpha_{l,0},\alpha_{l,1},\ldots,\alpha_{l,\lceil \ell/2\rceil-1})\in\mathbb{R}^{{\lceil \ell/2\rceil}}$ and $\boldsymbol{\beta}_{l}\triangleq(\beta_{l,0},\beta_{l,1},\ldots,\beta_{l,\lceil \ell/2\rceil-1})\in\mathbb{F}_2^{{\lceil \ell/2\rceil}}$ the LLRs and HBEs of its left child, respectively, and similarly by $\boldsymbol{\alpha}_{r}\triangleq(\alpha_{r,0},\alpha_{r,1},\ldots,\alpha_{r,\lfloor \ell/2\rfloor-1})\in\mathbb{R}^{\lfloor \ell/2\rfloor}$ and $\boldsymbol{\beta}_{r}\triangleq(\beta_{r,0},\beta_{r,1},\ldots,\beta_{r,\lfloor \ell/2\rfloor-1})\in\mathbb{F}_2^{\lfloor \ell/2\rfloor}$ the LLRs and HBEs of its right child, respectively. For convenience, define 
\begin{equation}\label{Equation4}
f(a,b)=\ln\frac{1+e^{a+b}}{e^a+e^b},\text{~for~}a,b\in\mathbb{R},
\end{equation} 
and
\begin{equation}\label{Equation5}
g(a,b,c)=b+(-1)^c\cdot a,\text{~for~}a,b\in\mathbb{R}\text{~and~}c\in\mathbb{F}_2 .
\end{equation}
The decoder starts by initializing the LLRs of the root node with a received vector $\boldsymbol{y}\in\mathbb{R}^N$ and then proceeds according to the following three rules.  
\begin{enumerate}
\item When the LLRs of a parent node are available, the LLRs of its left child are computed. 
If $\ell$ is even, 
\begin{equation}
	\alpha_{l,i}=f(\alpha_i,\alpha_{\lceil \ell/2\rceil+i}),\text{~for~}0\leq i\leq \lceil \ell/2\rceil-1,
\end{equation}
which is the same as the case in the SC decoding of polar codes. If $\ell$ is odd, 
\begin{align}
	\alpha_{l,i}=
	\begin{cases}
		\alpha_{l,i}=f(\alpha_i,\alpha_{\lceil \ell/2\rceil+i}),&\text{~for~}0\leq i\leq \lceil \ell/2\rceil-2\\
		\alpha_{l,i}=\alpha_{i},&\text{~for~}i=\lceil \ell/2\rceil-1
	\end{cases}.
\end{align}
\item When the LLRs of a parent node and the HBEs of its left child are available, the LLRs of its right child are computed as
\begin{equation}
	\alpha_{r,i}=g(\alpha_i,\alpha_{\lceil \ell/2\rceil+i},\beta_{l,i}),\text{~for~}i=0,1,\ldots,\lfloor \ell/2\rfloor-1.
\end{equation}
\item When the HBEs of a pair of sibling nodes are available, the HBEs of their parent 
are computed as
\begin{equation}
\boldsymbol{\beta}=(\boldsymbol{\beta}_l\oplus\boldsymbol{\beta}_r,\boldsymbol{\beta}_r),
\end{equation}
where the operation ``$\oplus$'' is defined in $(\ref{Equation1})$ and $(\ref{Equation2})$.
\end{enumerate}
Notice that, in contrast to the LLRs which are computed from the root nodes, the HBEs are computed from the leaf nodes in a serial manner. The leaf nodes are of two kinds:~the active nodes and the frozen nodes. For a frozen leaf node, its HBE is directly set to zero. For an active leaf node, its HBE is set to one if its LLR is less than zero, and zero otherwise.  Once all the tree nodes are estimated, the decoding process is terminated and we can extract an estimate of the data sequence from the active leaf nodes.

\subsection{Algebraic Description}\label{section2subsection3}
Given a coding tree and a choice of the active leaf nodes which is termed as rate-profiling, all possible labels of the root node form a binary linear code. We refer to such a code as a BBT polar code. The encoding process of a BBT polar code with length $N$ and dimension $K$ can be described as follows. We index the leaf nodes from left to right by $0,1,\ldots,N-1$. Then, denote by $\mathscr{A}$ and $\mathscr{F}\triangleq\mathscr{A}^c$ the index set of active leaf nodes and that of frozen leaf nodes, respectively, where $\mathscr{A}\subset\{0,1,\ldots,N-1\}$ and $|\mathscr{A}|=K$. Let $\boldsymbol{w}\in\mathbb{F}_2^N$ be a leaf node sequence such that $\boldsymbol{w}_{\mathscr{A}}$~(the components of $\boldsymbol{w}$ indexed by $\mathscr{A}$) is a data sequence and $\boldsymbol{w}_{\mathscr{F}}$ is the all-zero sequence. Then, we can obtain a codeword $\boldsymbol{c}=\boldsymbol{w}\boldsymbol{G}$, where $\boldsymbol{G}\in\mathbb{F}_2^{N\times N}$ is the invertible generator matrix corresponding to the coding tree for length $N$. The pseudo code for computing $\boldsymbol{G}$ is provided in Algorithm~\ref{Algorithm1}. As an example, the invertible generator matrix for BBT codes of length $N=9$ can be obtained as 
$$
\boldsymbol{G}_9=
\left(
\begin{smallmatrix}
1 & 0 & 0 & 0 & 0 & 0 & 0 & 0 & 0  \\ 
1 & 1 & 0 & 0 & 0 & 0 & 0 & 0 & 0  \\ 
1 & 0 & 1 & 0 & 0 & 0 & 0 & 0 & 0  \\ 
1 & 0 & 0 & 1 & 0 & 0 & 0 & 0 & 0  \\ 
1 & 1 & 0 & 1 & 1 & 0 & 0 & 0 & 0  \\ 
1 & 0 & 0 & 0 & 0 & 1 & 0 & 0 & 0  \\ 
1 & 1 & 0 & 0 & 0 & 1 & 1 & 0 & 0  \\ 
1 & 0 & 1 & 0 & 0 & 1 & 0 & 1 & 0  \\ 
1 & 1 & 1 & 1 & 0 & 1 & 1 & 1 & 1 \\ 
\end{smallmatrix}
\right).$$
\begin{algorithm}[t]
\caption{\emph{RecursivelyCompute$\boldsymbol{G}$}($N$)}\label{Algorithm1}
\SetKwInOut{Input}{Input}
\SetKwInOut{Output}{Output}
\Input{$N$:~a code length}
\Output{$\boldsymbol{G}$:~a BBT polar code generator matrix}
\If{$N==1$}{
$\boldsymbol{G}\leftarrow\begin{pmatrix}1\end{pmatrix}$\\return;}
$N_l\leftarrow\lceil N/2\rceil$;~$N_r\leftarrow\lfloor N/2\rfloor$\\
$\boldsymbol{G}_l \leftarrow\text{\emph{RecursivelyCompute}}\boldsymbol{G}(N_l)$;
$\boldsymbol{G}_r \leftarrow\text{\emph{RecursivelyCompute}}\boldsymbol{G}(N_r)$\\
\eIf{$N_l==N_r$}
{$\boldsymbol{G}\leftarrow
\begin{pmatrix}
\boldsymbol{G}_l & \boldsymbol{0}_{N_l\times N_r}\\
\boldsymbol{G}_r & \boldsymbol{G}_r
\end{pmatrix}$}
{$\widetilde{\boldsymbol{G}}_r \leftarrow
\begin{pmatrix}
\boldsymbol{G}~\boldsymbol{0}_{N_r\times 1}
\end{pmatrix}$;
$\boldsymbol{G}\leftarrow
\begin{pmatrix}
\boldsymbol{G}_l & \boldsymbol{0}_{N_l\times N_r}\\
\widetilde{\boldsymbol{G}}_r & \boldsymbol{G}_r
\end{pmatrix}$}
\end{algorithm}

\section{Code Construction}\label{section3}
Like the conventional polar codes, the performance of the proposed scheme is determined by the selection of the active leaf nodes. To achieve a good performance, it is critical to rank the reliabilities of the leaf nodes so that we can select the $K$ most reliable leaf nodes out of the $N$ leaf nodes. 

In this paper, we assume that a codeword is modulated using binary phase-shift keying~(BPSK) which maps $0\rightarrow+1$ and $1\rightarrow-1$, and then transmitted through an additive white Gaussian noise~(AWGN) with zero mean and variance $\sigma^2$. The SNR is defined as $E_b/N_0$, where $E_b$ is the energy per data bit and $N_0=2\sigma^2$. In this scenario, the GA method can be employed to evaluate the error performance of the leaf nodes at a given SNR. However, SNR-independent ranking is preferable for practice use. Therefore, we propose two new SNR-independent methods to rank more conveniently the leaf nodes.
\subsection{Gaussian Approximation Construction}\label{section3subsection1}
For the BBT polar codes, the reliabilities of the leaf nodes can be evaluated by the GA method~\cite{Trifonov2012} in a similar way to the conventional polar codes. We treat approximately the LLRs of the tree nodes in the decoding process as Gaussian variables and then compute the expectation of the LLRs. The GA construction of the BBT polar code is to select $K$ leaf nodes with the largest LLR expectations to transmit data bits. 

\subsection{Minimum Hamming Weight Construction}\label{section3subsection2}
Motivated by~\cite{Niu2019}, we can evaluate the reliabilities of the leaf nodes by their error performance under genie-aided SC decoding\footnote{The genie-aided SC decoder knows the correct previous leaf nodes when the current leaf node is visited.}. Given a coding tree with generator matrix $\boldsymbol{G}\in\mathbb{F}_2^{N\times N}$, for the $i$-th leaf node, define the corresponding polar subcode by
\begin{equation}
\mathscr{C}^{(i)}\triangleq\{
\boldsymbol{c}=(\boldsymbol{0}_0^{i-1},1,\boldsymbol{w}_{i+1}^{N-1})\boldsymbol{G}\mid\boldsymbol{w}_{i+1}^{N-1}\in\mathbb{F}_2^{N-i-1}\}.
\end{equation} 
The error rate of the $i$-th leaf node under genie-aided SC decoding, denoted by ${\varepsilon}^{(i)}$, can be upper bounded by
\begin{align}\label{Eq2}
{\varepsilon}^{(i)}\leq \sum_{1\leq d\leq N}A^{(i)}(d){\rm PEP}(N,d),
\end{align}
where $A^{(i)}(d)$ is the number of the codewords with Hamming weight $d$ in $\mathscr{C}^{(i)}$ and ${\rm PEP}(N,d)$ is the pairwise error probability between the all-zero codeword and a codeword with Hamming weight $d$. Based on~(\ref{Eq2}), we can give an estimate of ${\varepsilon}^{(i)}$ as
\begin{align}
{\varepsilon}^{(i)}\lesssim A^{(i)}(d_{{\rm min}}^{(i)}){\rm PEP}(N,d_{{\rm min}}^{(i)}),
\end{align}
where $d_{{\rm min}}^{(i)}$ and $A^{(i)}(d_{{\rm min}}^{(i)})$ denote the MHW of $\mathscr{C}^{(i)}$ and the number of the MHW codewords in $\mathscr{C}^{(i)}$, respectively. 

Observe that ${\varepsilon}^{(i)}$ is relevant to $d_{{\rm min}}^{(i)}$ and $A^{(i)}(d_{{\rm min}}^{(i)})$. We propose to sort the leaf nodes according to $d_{{\rm min}}^{(i)}$ in ascending order and then sort those leaf nodes with the same $d_{{\rm min}}^{(i)}$ according to $A^{(i)}(d_{{\rm min}}^{(i)})$ in descending order, resulting in an ascending reliability order of $N$ leaf nodes. The MHW construction of the BBT polar code is to select the last $K$ indices in the ordered sequence as the active set $\mathscr{A}$. 

Now we show how to evaluate $d_{{\rm min}}^{(i)}$ and $A^{(i)}(d_{{\rm min}}^{(i)})$. When $N$ is a power of two, $d_{{\rm min}}^{(i)}$ and $A^{(i)}(d_{{\rm min}}^{(i)})$ can be calculated using the iterative enumeration algorithm presented in~\cite{Niu2019}. However, when $N$ is not a power of two, it is not easy to calculate $d_{{\rm min}}^{(i)}$ and $A^{(i)}(d_{{\rm min}}^{(i)})$. Hence we turn to a randomly interleaved coding tree. To be precise, similar to~\cite{Chiu2020}, we can insert uniform interleavers between intermediate layers of a coding tree, as described below. Given a code length $N$, the root node is labeled by a codeword $\boldsymbol{c}\in\mathbb{F}_2^N$. Each node labeled by a vector $\boldsymbol{v}\in\mathbb{F}_2^{\ell}$, where $2\leq \ell\leq N$, has a left child node labeled by $\boldsymbol{v}_{l}\in\mathbb{F}_2^{\lceil \ell/2\rceil}$ and a right child node labeled by $\boldsymbol{v}_{r}\in\mathbb{F}_2^{\lfloor \ell/2\rfloor}$. Define that
\begin{align}
\boldsymbol{v}'_{r}\triangleq
\begin{cases}
\boldsymbol{v}_{r}\Pi,~&\text{if~$\lceil \ell/2\rceil=\lfloor \ell/2\rfloor$}\\
(\boldsymbol{v}_{r},0)\Pi,~&\text{if~$\lceil \ell/2\rceil=\lfloor \ell/2\rfloor+1$}
\end{cases}
,
\end{align}
where $\Pi$ is a random permutation matrix, and
\begin{align}
\boldsymbol{v}_{l}\oplus \boldsymbol{v}'_{r}=(v_{l,0}+v'_{r,0},\ldots,v'_{l,\lceil \ell/2\rceil-1}+v'_{r,\lceil \ell/2\rceil-1}),
\end{align}
where ``$+$'' denotes the addition over $\mathbb{F}_2$. The relationship between a parent node and its child nodes in the interleaved coding tree is given by 
\begin{align}
\boldsymbol{v}=(\boldsymbol{v}_{l}\oplus\boldsymbol{v}'_{r}, \boldsymbol{v}_{r}).
\end{align}
Under the assumption that all the interleavers are uniform at random, the labels of the parent nodes are random. Then, for each node with length $\ell$, we define a weight enumerating function~(WEF) as
\begin{align}
{\rm{WEF}}= \sum_{d=0}^{\ell}B(d)Y^d,
\end{align}
where $Y$ is a dummy variable and $B(d)$ is the average number of the labels with Hamming weight $d$. Let $d_{\rm min}$ be the minimum positive integer with $B(d_{\rm min}) > 0$. The minimum weight enumerating function~(MWEF) of the node is defined as
\begin{align}
{\rm{MWEF}}=B(d_{\rm min})Y^{d_{\rm min}}.
\end{align}
\begin{lemma}\label{Le1}
Consider a pair of sibling nodes in the randomly interleaved coding tree. Let $\boldsymbol{v}_{l}$ be a sample label of the left child node and $\boldsymbol{v}_{r}$ be a sample label of the right child node. The vector $\boldsymbol{v}_{l}$ is with length $\ell_l$ and Hamming weight $d_l$ and the vector $\boldsymbol{v}_{r}$ is with length $\ell_r$ and Hamming weight $d_r$. The weight distribution for $\boldsymbol{v}=(\boldsymbol{v}_{l}\oplus \boldsymbol{v}'_{r},\boldsymbol{v}_{r})$ is given by
\begin{align}
{\rm{E}}_{\Pi}\left[Y^{w_{\rm{H}}\left(\boldsymbol{v}\right)}\right] \triangleq \sum_{\Pi} \Pr\{\Pi\} Y^{w_{\rm{H}}\left(\boldsymbol{v}\right)}
=\sum_{k=\max \left(0, d_{l}+d_{r}-\ell_{l}\right)}^{\min \left(d_{l}, d_{r}\right)} \frac{
\left(\begin{array}{c}d_{l} \\ k\end{array}\right)
\left(\begin{array}{c}\ell_{l}-d_{l} \\ d_{r}-k\end{array}\right)
}{\left(\begin{array}{c}\ell_{l} \\ d_{r}\end{array}\right)} Y^{d_{l}+2d_{r}-2k},
\end{align}
where $w_{\rm{H}}\left(\boldsymbol{v}\right)$ denotes the Hamming weight of $\boldsymbol{v}$.
\end{lemma}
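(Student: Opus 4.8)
The plan is to decompose the Hamming weight of the parent label $\boldsymbol{v}=(\boldsymbol{v}_{l}\oplus\boldsymbol{v}'_{r},\boldsymbol{v}_{r})$ into a deterministic part and a random part, and then to recognize the random part as a hypergeometric draw. First I would note that the right block of $\boldsymbol{v}$ is exactly $\boldsymbol{v}_{r}$, which contributes a fixed weight $d_{r}$ irrespective of the interleaver $\Pi$. Hence all of the randomness resides in the left block $\boldsymbol{v}_{l}\oplus\boldsymbol{v}'_{r}$, and it suffices to determine the distribution of its weight under a uniform $\Pi$.

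Next I would analyze the XOR of two binary vectors of common length $\ell_{l}$. Observe that $\boldsymbol{v}'_{r}$ is obtained from $\boldsymbol{v}_{r}$ (padded with a single zero in the unequal-length case, which does not alter its weight) followed by the random permutation $\Pi$; therefore $\boldsymbol{v}'_{r}$ has length $\ell_{l}$ and weight $d_{r}$. Letting $k$ denote the number of coordinates at which both $\boldsymbol{v}_{l}$ and $\boldsymbol{v}'_{r}$ carry a one, a coordinate contributes a one to the XOR precisely when exactly one of the two vectors is one there, so the weight of $\boldsymbol{v}_{l}\oplus\boldsymbol{v}'_{r}$ equals $(d_{l}-k)+(d_{r}-k)=d_{l}+d_{r}-2k$. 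Adding the fixed weight $d_{r}$ from the right block yields $w_{\rm H}(\boldsymbol{v})=d_{l}+2d_{r}-2k$, which already matches the exponent of $Y$ in the claimed formula.

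It then remains to compute the probability that the overlap equals $k$. Because $\Pi$ is a uniform permutation, the support of $\boldsymbol{v}'_{r}$ is a uniformly random $d_{r}$-subset of the $\ell_{l}$ coordinates. Treating the $d_{l}$ coordinates where $\boldsymbol{v}_{l}=1$ as the ``marked'' coordinates, the overlap $k$ is the size of the intersection of this uniformly random $d_{r}$-subset with the fixed $d_{l}$-subset, which follows the hypergeometric law $\binom{d_{l}}{k}\binom{\ell_{l}-d_{l}}{d_{r}-k}/\binom{\ell_{l}}{d_{r}}$. The admissible range of $k$ is forced by the binomial coefficients: $k\le\min(d_{l},d_{r})$ and $d_{r}-k\le\ell_{l}-d_{l}$, i.e.\ $k\ge\max(0,d_{l}+d_{r}-\ell_{l})$. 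Summing $Y^{d_{l}+2d_{r}-2k}$ weighted by these probabilities over the admissible $k$ delivers the stated identity for ${\rm E}_{\Pi}[Y^{w_{\rm H}(\boldsymbol{v})}]$.

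The argument is essentially routine; the only point requiring care is the reduction from ``uniformly random permutation of a fixed-weight vector'' to ``uniformly random support subset,'' together with the bookkeeping that the zero-padding in the odd-length case leaves both the weight $d_{r}$ and the uniformity of the placement intact. Once that reduction is justified, the hypergeometric probability and the weight count combine immediately to give the result.
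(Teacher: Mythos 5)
Your proposal is correct and follows essentially the same route as the paper's proof: both define $k$ as the number of overlapping ones between $\boldsymbol{v}_{l}$ and $\boldsymbol{v}'_{r}$, identify its law as hypergeometric under the uniform interleaver, compute the XOR weight $d_{l}+d_{r}-2k$, and add the fixed contribution $d_{r}$ from the concatenated right block. Your write-up is somewhat more explicit than the paper's (notably in justifying the reduction to a uniformly random support subset and in handling the zero-padding), but the underlying argument is identical.
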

\begin{proof}
Let $k$ be the number of positions at which the components in $\boldsymbol{v}_{l}$ and $\boldsymbol{v}'_{r}$ are both equal to 1. With the uniform interleaver, we see that $k$ is a random integer, ranging from $\max \left(0, d_{l}+d_{r}-\ell_{l}\right)$ to $\min \left(d_{l}, d_{r}\right)$ with probability $\left(\begin{array}{c}d_{l} \\ k\end{array}\right)\left(\begin{array}{c}\ell_{l}-d_{l} \\ d_{r}-k\end{array}\right)\bigg/\left(\begin{array}{c}\ell_{l} \\ d_{r}\end{array}\right)$. Given  $k$, the Hamming weight of $\boldsymbol{v}_{l}\oplus\boldsymbol{v}'_{r}$ is $d_{l}+d_{r}-2k$. Finally, the additional concatenation of $\boldsymbol{v}_{r}$ gives the Hamming weight of $\boldsymbol{v}=(\boldsymbol{v}_{l}\oplus\boldsymbol{v}'_{r},\boldsymbol{v}_{r})$ as $d_{l}+d_{r}-2k+d_{r}=d_{l}+2d_{r}-2k$.
\end{proof}

\begin{theorem}\label{Th1}
Consider a parent node with two child nodes in the randomly interleaved coding tree. Let $\sum_{d_l=0}^{\ell_l}B_{l}(d_l)Y^{d_l}$ and $\sum_{d_r=0}^{\ell_r}B_{r}(d_r)Y^{d_r}$ be the WEF of the left child node and that of the right child node, respectively. The WEF of the parent node is
\begin{align}
{\rm{WEF}}=\sum_{d_l=0}^{\ell_l}\sum_{d_r=0}^{\ell_r}B_{l}(d_l)B_{r}(d_r)\sum_{k=\max \left(0, d_{l}+d_{r}-\ell_{l}\right)}^{\min \left(d_{l}, d_{r}\right)} \frac{
\left(\begin{array}{c}d_{l} \\ k\end{array}\right)
\left(\begin{array}{c}\ell_{l}-d_{l} \\ d_{r}-k\end{array}\right)
}{
\left(\begin{array}{c}\ell_{l} \\ d_{r}\end{array}\right)} Y^{d_{l}+2d_{r}-2k}.
\end{align}
\end{theorem}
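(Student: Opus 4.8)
The plan is to view the parent's WEF as an expectation of a weight-counting generating function and then decompose it by conditioning on the Hamming weights of the two child labels. Concretely, I would write the parent WEF as ${\rm E}\left[\sum_{\boldsymbol{v}} Y^{w_{\rm{H}}(\boldsymbol{v})}\right]$, where the sum ranges over all parent labels induced by the free leaf positions below the node and the expectation is over all the random interleavers in the subtree rooted at the parent. Once the current-level permutation $\Pi$ is fixed, the relation $\boldsymbol{v}=(\boldsymbol{v}_{l}\oplus\boldsymbol{v}'_{r},\boldsymbol{v}_{r})$ puts each parent label in one-to-one correspondence with a pair $(\boldsymbol{v}_{l},\boldsymbol{v}_{r})$ of child labels, since $\boldsymbol{v}_{r}$ is recovered from the last $\ell_{r}$ components of $\boldsymbol{v}$ and then $\boldsymbol{v}_{l}$ from the first $\ell_{l}$ components; hence the single sum over $\boldsymbol{v}$ splits into a double sum over left and right child labels.

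Next I would apply the tower property to isolate the permutation $\Pi$ attached to this level. Conditioning on all the lower-level randomness (which fixes the actual child labels $\boldsymbol{v}_{l}$ and $\boldsymbol{v}_{r}$) and taking the inner expectation over $\Pi$, Lemma~\ref{Le1} supplies exactly ${\rm E}_{\Pi}\left[Y^{w_{\rm{H}}(\boldsymbol{v})}\mid\boldsymbol{v}_{l},\boldsymbol{v}_{r}\right]$, which depends on the child labels only through their weights $d_{l}=w_{\rm{H}}(\boldsymbol{v}_{l})$ and $d_{r}=w_{\rm{H}}(\boldsymbol{v}_{r})$. Writing $\Phi(d_{l},d_{r};Y)$ for the kernel provided by Lemma~\ref{Le1} and grouping the child labels by their weights then turns the inner double sum into $\sum_{d_{l}}\sum_{d_{r}}N_{l}(d_{l})N_{r}(d_{r})\,\Phi(d_{l},d_{r};Y)$, where $N_{l}(d_{l})$ and $N_{r}(d_{r})$ count the child labels of each weight.

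It remains to take the outer expectation and read off $B_{l}(d_{l})$ and $B_{r}(d_{r})$. Because $\Phi(d_{l},d_{r};Y)$ is deterministic, linearity of expectation pulls it outside, leaving $\sum_{d_{l}}\sum_{d_{r}}{\rm E}\left[N_{l}(d_{l})N_{r}(d_{r})\right]\Phi(d_{l},d_{r};Y)$. The step I expect to be the crux is arguing that ${\rm E}\left[N_{l}(d_{l})N_{r}(d_{r})\right]=B_{l}(d_{l})B_{r}(d_{r})$, i.e. that the weight profiles of the two children are independent. This is exactly where the tree structure enters: the left and right children are generated from disjoint sets of free leaf bits and from independent families of interleavers, so $N_{l}$ and $N_{r}$ are functions of independent randomness and their expectations factorize. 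Once this independence is justified, substituting ${\rm E}[N_{l}(d_{l})]=B_{l}(d_{l})$ and ${\rm E}[N_{r}(d_{r})]=B_{r}(d_{r})$ together with the explicit form of $\Phi$ from Lemma~\ref{Le1} reproduces the claimed expression verbatim, the remaining work being only the bookkeeping of the summation indices.
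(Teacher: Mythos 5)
Your proof is correct and follows the same route the paper intends: the paper states only that the theorem is ``directly derived from Lemma~\ref{Le1}'' and omits the details, and your argument is precisely that derivation spelled out. In particular, your tower-property decomposition and the factorization ${\rm E}[N_l(d_l)N_r(d_r)]=B_l(d_l)B_r(d_r)$ (justified by the independence of the interleavers in the two disjoint subtrees) supply exactly the bookkeeping the paper leaves implicit.
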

\begin{proof}
It can be directly derived from Lemma~\ref{Le1} and the proof is omitted here.
\end{proof}
\begin{corollary}\label{Co1}
Consider a parent node with two child nodes in the randomly interleaved coding tree. If the WEF of the left child node is $1$ and the MWEF of the right node is $B_{r}(d_{{\rm min},r})Y^{d_{{\rm min},r}}$, then the MWEF of the parent node is $B_{r}(d_{{\rm min},r})Y^{2d_{{\rm min},r}}$.
\end{corollary}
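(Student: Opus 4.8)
The plan is to obtain this corollary as a direct specialization of Theorem~\ref{Th1}, so that no new machinery is required. The hypothesis that the left child has WEF equal to $1$ means precisely that $B_l(0)=1$ and $B_l(d_l)=0$ for every $d_l\geq 1$; equivalently, the left child carries the all-zero label with probability one. I would begin by substituting this into the double sum of Theorem~\ref{Th1}, which immediately annihilates every term except the one indexed by $d_l=0$.

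The key simplification is that, once $d_l=0$ is fixed, the inner summation over $k$ collapses to the single term $k=0$. Indeed, the factor $\binom{d_l}{k}=\binom{0}{k}$ vanishes for all $k\geq 1$ and equals $1$ for $k=0$; it then only remains to confirm that $k=0$ actually lies in the summation range $\max(0,d_r-\ell_l)\leq k\leq \min(0,d_r)$. Here $\min(0,d_r)=0$ always, while $\max(0,d_r-\ell_l)=0$ provided $d_r\leq \ell_l$. Since the right child has length $\ell_r=\lfloor \ell/2\rfloor\leq \lceil \ell/2\rceil=\ell_l$, every admissible weight obeys $d_r\leq \ell_r\leq \ell_l$, so $k=0$ is in range and the denominator $\binom{\ell_l}{d_r}$ is nonzero. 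Evaluating the surviving term at $k=0$ gives coefficient $\binom{0}{0}\binom{\ell_l}{d_r}\big/\binom{\ell_l}{d_r}=1$ and exponent $d_l+2d_r-2k=2d_r$, so the parent WEF reduces to $\sum_{d_r=0}^{\ell_r}B_r(d_r)Y^{2d_r}$, i.e. the right child's WEF with $Y$ replaced by $Y^2$. This matches the intuition: when $\boldsymbol{v}_l=\boldsymbol{0}$, the parent is $(\boldsymbol{v}'_r,\boldsymbol{v}_r)$ with $\boldsymbol{v}'_r$ a permuted, zero-padded copy of $\boldsymbol{v}_r$, hence still of weight $d_r$, yielding total weight $2d_r$.

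Finally I would read off the MWEF. By definition it retains only the lowest positive power of $Y$. In $\sum_{d_r}B_r(d_r)Y^{2d_r}$ the exponents are the values $2d_r$ over those $d_r$ with $B_r(d_r)>0$; since the map $d_r\mapsto 2d_r$ is injective, no coefficients merge, and the smallest positive exponent corresponds to the smallest positive $d_r$ with $B_r(d_r)>0$, which is exactly $d_{{\rm min},r}$. The $d_r=0$ term contributes only to the constant ($Y^0$) part and is therefore irrelevant. Hence the minimum positive exponent of the parent is $2d_{{\rm min},r}$ with coefficient $B_r(d_{{\rm min},r})$, giving ${\rm MWEF}=B_r(d_{{\rm min},r})Y^{2d_{{\rm min},r}}$, as claimed. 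I do not expect a genuine obstacle; the only step that warrants care is justifying the collapse of the $k$-sum, specifically the inequality $d_r\leq \ell_l$ inherited from the tree's splitting rule, which simultaneously places $k=0$ in range and keeps the denominator nonzero.
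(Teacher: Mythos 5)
Your proof is correct and follows essentially the same route as the paper: specialize Theorem~\ref{Th1} with the left child's WEF equal to $1$, obtain the parent WEF $\sum_{d_r=0}^{\ell_r}B_r(d_r)Y^{2d_r}$, and read off the minimum positive-weight term. The paper states this in two lines without detail; your careful justification of the collapse of the $k$-sum (in particular the inequality $d_r\leq \ell_r\leq \ell_l$ guaranteed by the tree's splitting rule) is exactly the verification the paper leaves implicit.
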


\begin{proof}
Let $\sum_{d_r=0}^{\ell_r}B_{r}(d_r)Y^{d_{r}}$ be the WEF of the right child node. From Theorem~\ref{Th1}, the WEF of the parent node is $\sum_{d_r=0}^{\ell_r}B_{r}(d_r)Y^{2d_{r}}$. Then, we have that the MWEF of the parent node is $B_{r}(d_{{\rm min},r})Y^{2d_{{\rm min},r}}$.
\end{proof}

\begin{corollary}\label{Co2}
Consider a parent node with two child nodes in the randomly interleaved coding tree. If the MWEF of the left child node is $B_{l}(d_{{\rm min},l})Y^{d_{{\rm min},l}}$ and the WEF of the right node is $(Y+1)^{\ell_r}$, then the MWEF of the parent node is 
\begin{align}
B_{l}(d_{{\rm min},l})\sum_{0\leq d_{r}\leq \min\left(d_{{\rm min},l},\ell_r\right)}\frac{
\left(\begin{array}{c}\ell_{r} \\ d_{r}\end{array}\right)
\left(\begin{array}{c}d_{{\rm min},l} \\ d_{r}\end{array}\right)
}{\left(\begin{array}{c}\ell_{l} \\ d_{r}\end{array}\right)}Y^{d_{{\rm min},l}}.
\end{align}
\end{corollary}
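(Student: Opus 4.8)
The plan is to apply Theorem~\ref{Th1} to obtain the full WEF of the parent node and then extract its minimum-degree term. First I would substitute the given data into the general formula. Since the right child has WEF $(Y+1)^{\ell_r}$, its coefficients are $B_r(d_r)=\binom{\ell_r}{d_r}$ for $0\leq d_r\leq\ell_r$. For the left child only the MWEF is known, so $B_l(d_{{\rm min},l})$ is the coefficient of its minimum-weight term; the goal is to show that the higher-weight terms of the left child, whatever they are, cannot contribute to the minimum-weight term of the parent, so that the MWEF of the left child alone suffices.

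The central step is a careful analysis of the exponent $d_l+2d_r-2k$ in each summand of Theorem~\ref{Th1}, subject to $\max(0,d_l+d_r-\ell_l)\leq k\leq\min(d_l,d_r)$. For fixed $d_l$, the exponent is minimized by taking $k$ as large as possible; since $k\leq\min(d_l,d_r)\leq d_r$, we have $d_l+2d_r-2k\geq d_l$, with equality exactly when $k=d_r$, which in turn forces $d_r\leq d_l$. Hence the smallest exponent contributed by a left-child weight $d_l$ is exactly $d_l$, so the global minimum degree over all summands is $d_{{\rm min},l}$. Moreover, if $d_l>d_{{\rm min},l}$, then $k\leq d_r$ gives exponent $\geq d_l>d_{{\rm min},l}$, so such terms never reach the minimum degree. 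I expect this to be the main obstacle: one must argue rigorously that only the left child's minimum-weight term survives, justifying the use of its MWEF rather than its full WEF.

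Having isolated $d_l=d_{{\rm min},l}$ and $k=d_r$, the rest is routine bookkeeping. With $k=d_r$, the factor $\binom{\ell_l-d_l}{d_r-k}=\binom{\ell_l-d_{{\rm min},l}}{0}=1$, so each surviving summand contributes $B_l(d_{{\rm min},l})\frac{\binom{\ell_r}{d_r}\binom{d_{{\rm min},l}}{d_r}}{\binom{\ell_l}{d_r}}$ to the coefficient of $Y^{d_{{\rm min},l}}$. The index $d_r$ ranges over $0\leq d_r\leq\min(d_{{\rm min},l},\ell_r)$, where $d_r\leq d_{{\rm min},l}$ stems from $k=d_r\leq d_l=d_{{\rm min},l}$ and $d_r\leq\ell_r$ from $B_r(d_r)=0$ beyond $\ell_r$. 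Summing these contributions yields precisely the claimed MWEF. As a final check I would verify that the lower constraint $k\geq\max(0,d_l+d_r-\ell_l)$ is automatically satisfied, since $k=d_r\geq d_{{\rm min},l}+d_r-\ell_l$ reduces to $\ell_l\geq d_{{\rm min},l}$, which always holds.
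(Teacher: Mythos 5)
Your proof is correct and takes essentially the same route as the paper's: apply Theorem~\ref{Th1} with $B_r(d_r)=\binom{\ell_r}{d_r}$, observe that the exponent satisfies $d_l+2d_r-2k\geq d_l$ (since $k\leq d_r$) with equality exactly when $k=d_r$, and keep only the $d_l=d_{{\rm min},l}$, $k=d_r$ terms, which collapses $\binom{\ell_l-d_l}{d_r-k}$ to $1$ and yields the claimed sum over $0\leq d_r\leq\min(d_{{\rm min},l},\ell_r)$. The paper compresses your equality-case analysis into an indicator $\mathbb{I}\{k=d_r\}$; your extra checks (that $k=d_r$ forces $d_r\leq d_l$, and that the lower limit on $k$ is automatically met) are details the paper leaves implicit, but the argument is the same.
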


\begin{proof}
Let $\sum_{d_l=0}^{\ell_l}B_{l}(d_l)Y^{d_l}$  be the WEF of the left child node. From Theorem~\ref{Th1}, the WEF of the parent node is 
\begin{align}
\sum_{d_{l}}B_{l}(d_{l})\sum_{0\leq d_{r}\leq \ell_r}\left(\begin{array}{c}\ell_{r} \\ d_{r}\end{array}\right)
\sum_{k=\max \left(0, d_{l}+d_{r}-\ell_{l}\right)}^{\min \left(d_{l}, d_{r}\right)} \frac{
	\left(\begin{array}{c}d_{l} \\ k\end{array}\right)
	\left(\begin{array}{c}\ell_{l}-d_{l} \\ d_{r}-k\end{array}\right)
}{
	\left(\begin{array}{c}\ell_{l} \\ d_{r}\end{array}\right)} Y^{d_{l}+2d_{r}-2k}.
\end{align}
Since $d_r\geq k$, the Hamming weight of $\boldsymbol{v}=(\boldsymbol{v}_{l}\oplus \boldsymbol{v}'_{r},\boldsymbol{v}_{r})$ is no less than the Hamming weight of $\boldsymbol{v}_l$, i.e.,   $d_l+2d_r-2k\geq d_l$. Then, the MWEF of the parent node is given by
\begin{align}
&B_{l}(d_{{\rm min},l})
\sum_{0\leq d_{r}\leq \ell_r}\left(\begin{array}{c}\ell_{r} \\ d_{r}\end{array}\right)
\sum_{k=\max \left(0, d_{{\rm min},l}+d_{r}-\ell_{l}\right)}^{\min \left(d_{{\rm min},l}, d_{r}\right)} \frac{
\left(\begin{array}{c}d_{{\rm min},l} \\ k\end{array}\right)
\left(\begin{array}{c}\ell_{l}-d_{{\rm min},l} \\ d_{r}-k\end{array}\right)
}{
\left(\begin{array}{c}\ell_{l} \\ d_{r}\end{array}\right)}\mathbb{I}\{k=d_r\}Y^{d_{{\rm min},l}}\\
=&B_{l}(d_{{\rm min},l})
\sum_{0\leq d_{r}\leq \min\left(d_{{\rm min},l},\ell_r\right)}\left(\begin{array}{c}\ell_{r} \\ d_{r}\end{array}\right)\frac{
\left(\begin{array}{c}d_{{\rm min},l} \\ d_{r}\end{array}\right)
}{
\left(\begin{array}{c}\ell_{l} \\ d_{r}\end{array}\right)}Y^{d_{{\rm min},l}},
\end{align}
where $\mathbb{I}\{k=d_r\}$ is 1 if $k=d_r$, and 0 otherwise.
\end{proof}

From Corollary~\ref{Co1} and~\ref{Co2}, we propose for polar subcodes to estimate their MHWs and their numbers of MHW codewords using the randomly interleaved coding tree. Consider the $i$-th polar subcode. To begin with, the MWEF of the $i$-th leaf node is initialized by $Y$. Next, we track the path from the $i$-th leaf node to the root over the randomly interleaved coding tree. When the MWEF of a node on the path $B(d_{\rm min})Y^{d_{\rm min}}$ is available, the MWEF of its parent node can be computed by distinguishing two cases. In the case when the node is a right child node, the MWEF of its parent node is $B(d_{\rm min})Y^{2d_{\rm min}}$, while in the case when the node is a left child node, the MWEF of its parent node is 
\begin{align}
B(d_{\rm min})
\sum_{0\leq d_{r}\leq \min\left(d_{\rm min},\ell_r\right)}
\tbinom{\ell_{r}}{d_{r}}
\frac{\tbinom{d_{\rm min}}{d_r}}{\tbinom{\ell_{l}}{d_{r}}}Y^{d_{\rm min}},
\end{align}
where $\ell_l$ is the length of the node and $\ell_r$ is the length of its sibling node. The procedure is terminated when the MWEF of the root node is calculated out. Then we can obtain the estimate of the MHW and that of the number of the MHW codewords of the $i$-th polar subcode. Hereafter, these two parameters are used to rank the leaf nodes.
\subsection{Polarization Weight Construction}\label{section3subsection3}
In the following, we propose a heuristic PW construction of the BBT polar code. 

In a coding tree, we label the branch from a parent node to its left child node by $0$ and the branch from a parent node to its right child node by $1$. Then, to each leaf node, we assign a sequence $\boldsymbol{b}=(b_0,b_1,\ldots,b_{m-1})$ that records the labels of the path~(of length $m$) from the root to the considered leaf node. This defines a one-to-one correspondence between a binary sequence and a leaf node. Notice that a leaf node at the second-to-last level is associated with a sequence of length $m=\lfloor \log_2N\rfloor$ and a leaf node at the last level is associated with a sequence of length $m=\lceil \log_2N\rceil$. In Fig.~\ref{label_of_leaf_nodes}, we show two coding trees. One is for $N=8$ and the other is for $N=6$. We have labeled the branches of both the coding trees. 
\begin{figure}[t]  
\centering
\subfloat[$N=8$]{
\includegraphics[width=0.6\textwidth]{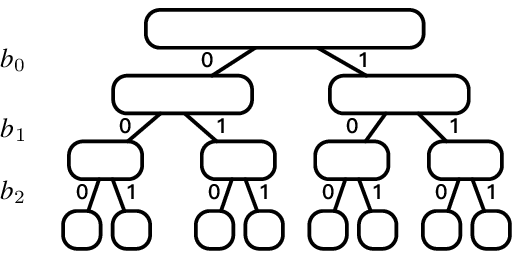}
}
\subfloat[$N=6$]{
\includegraphics[width=0.4\textwidth]{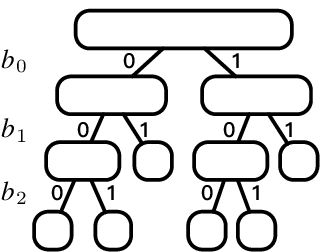}
}
\caption{The coding trees for $N=6$ and $N=8$, where the branches are labeled.}\label{label_of_leaf_nodes}
\end{figure}

For a BBT polar code, we propose to define a PW of the $i$-th leaf node node by
\begin{align}
	{\rm PW}^{(i)}\triangleq \sum_{j=0}^{m-1}b_j\cdot\kappa^{J-j},
\end{align}
where $\kappa$ is a design parameter and $J\triangleq\lceil\log_2N\rceil-1$. When $N$ is a power of two, all the leaf nodes are associated with the sequences with length $m=\log_2N$, which are exactly the binary expressions of the indices of the leaf nodes~(see Fig.~\ref{label_of_leaf_nodes}(a) for reference), indicating that the proposed PW for the BBT polar code is consistent with the PW introduced in~\cite{He2017} for the conventional polar code. In this paper, the parameter $\kappa=2^{1/4}$, as proposed in~\cite{3GPP167209}. The PW construction of the BBT polar code is to select $K$ leaf nodes with the largest PWs to transmit data bits. 
\subsection{Performance Evaluation and Latency Analysis}\label{section3subsection4}
\subsubsection{Decoding performance}
Fig.~\ref{BBTPCCs_N384K192} and Fig.~\ref{BBTPCCs_N768K384} show the performance comparisons among the proposed three BBT polar code constructions for $N=384$ and $K=192$ and for $N=768$ and $K=384$, respectively. The design SNR for the GA construction is $3~{\rm dB}$. The SC deocding, the SCL decoding with list size 8~(denoted by SCL(8)) and the CA-SCL decoding with list size 8 and 11-bit CRC~(denoted by CA-SCL(11,8)) are implemented in the simulations. 
\begin{figure}[t]
	\vskip-3cm
	\centering
	\includegraphics[width=\textwidth]{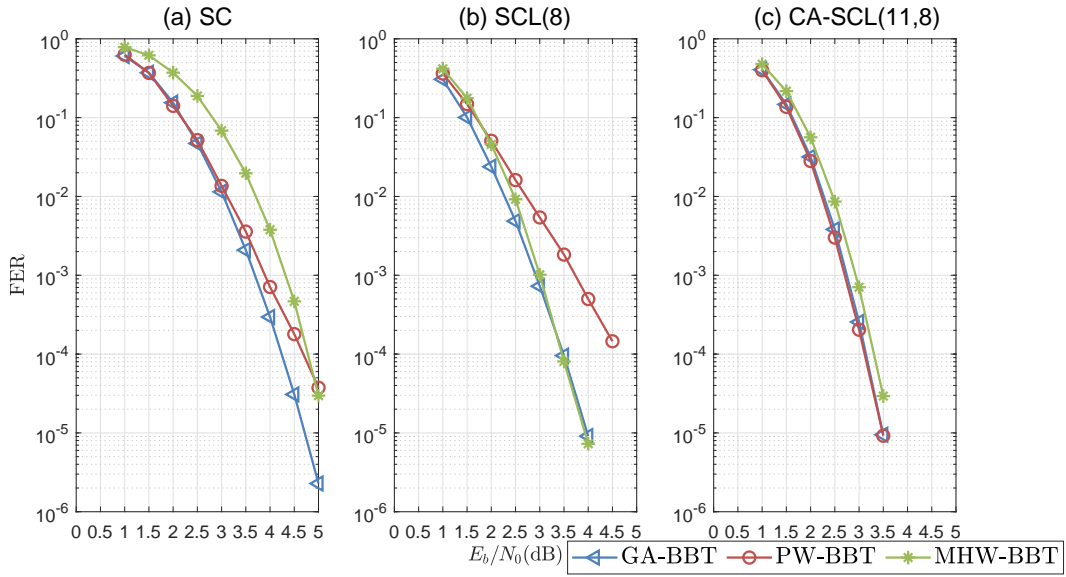}
	\caption{Performance comparisons among the proposed three BBT polar code constructions under the SC decoding, the SCL decoding with list size 8 and the CA-SCL decoding with list size 8 and 11-bit CRC. Here, $N=384$ and $K=192$.}\label{BBTPCCs_N384K192}
\end{figure}
\begin{figure}[t]
	\vskip-2cm
	\centering
	\includegraphics[width=\textwidth]{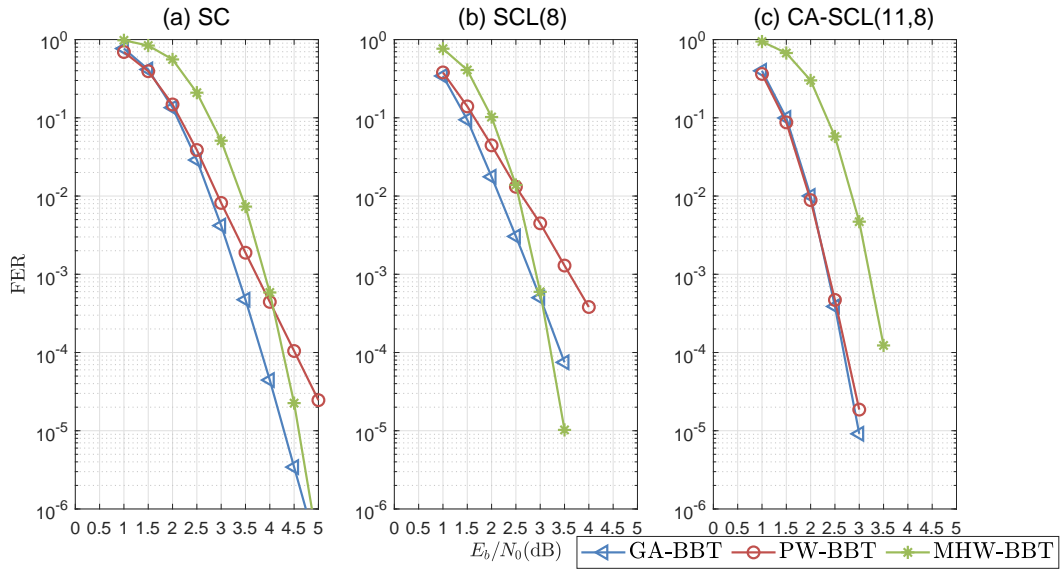}
	\caption{Performance comparisons among the proposed three BBT polar code constructions under the SC decoding, the SCL decoding with list size 8 and the CA-SCL decoding with list size 8 and 11-bit CRC. Here, $N=768$ and $K=384$.}\label{BBTPCCs_N768K384}
\end{figure}
\clearpage
We observe that 
\begin{enumerate}
\item Under the SC decoding and the SCL decoding, the GA construction performs the best in the low SNR region. However, as SNR increases in the high SNR region, the MHW construction tends to perform the best.
\item Under the CA-SCL decoding, the performance of the PW construction is similar to that of the GA construction. The MHW construction performs worse than the GA construction and the PW construction. A possible reason is that, in the case of the CA-SCL decoding, the actual code rate of a BBT polar code increases, and the MHW construction needs a higher SNR to achieve satisfactory performance.
\end{enumerate}

For practical use, length-flexible polar coding schemes have been devised in 5G~\cite{3GPP38212}. In this paper, we take the polar coding scheme for 5G downlink scenarios for comparison. In our simulations of 5G polar codes, a data sequence is first encoded by a CRC code. Before polar encoding, one of 5G rate-matching schemes~(puncturing, shortening and repetition) is selected according to code parameters, and a pre-freezing step is performed. Then, an output of the CRC encoder is encoded by a polar encoder, resulting in a polar codeword. Eventually, the codeword is sent to a sub-block interleaver and some bits of the interleaved codeword are chosen by a circular buffer to transmit. Note that for an unconstrained data length, we neglect 5G input bit interleaver in our simulations.

Fig.~\ref{BBTvs5G_N768} shows the performance comparisons between the constructed BBT polar codes and the 5G polar codes. The CA-SCL decoding with list size 8 and 11-bit CRC is implemented in the simulations. The tested code length is $N=768$ and code rates are $R\in\{1/4,1/2,3/4\}$. We observe that for the low and moderate rates, the PW-BBT polar codes can achieve similar performance to the 5G polar codes.

\begin{figure}[t]
	\centering
	\includegraphics[width=0.65\textwidth]{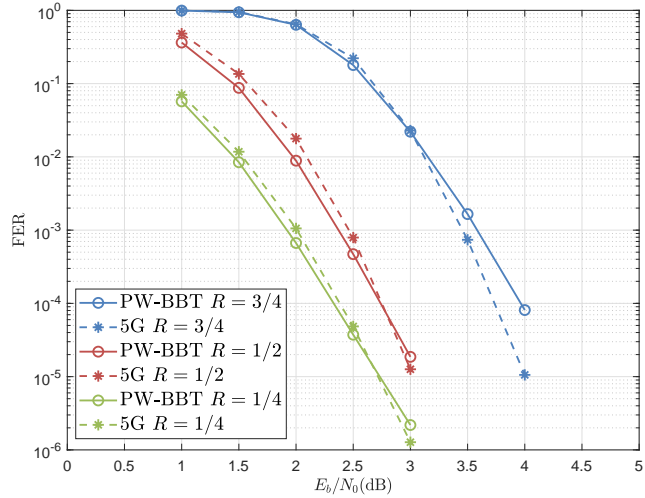}
	\caption{Performance comparisons between the constructed BBT polar codes and the 5G polar codes. The CA-SCL decoding with list size 8 and 11-bit CRC is implemented in the simulations. Here, $N=768$ and $R\in\{1/4,1/2,3/4\}$. }\label{BBTvs5G_N768}
\end{figure}

\subsubsection{Decoding complexity and latency}
We analyze the SC decoding latency of our proposed codes by examining the number of LLR calculations defined in~(\ref{Equation4}) and (\ref{Equation5}). To decode a single codeword of length $N$, the SC decoder is implemented over a coding tree which has $\lceil \log_2N\rceil$ levels. For each level, the decoder executes at most $N$ LLR calculations. The total number of  LLR calculations in the SC decoding process required in our scheme is at most $N\lceil \log_2N\rceil$, which~(in the case of when $N$ is not a power of two) is less than that required in the puncturing and shortening polar coding schemes, namely $2^{\lceil \log_2N\rceil}\lceil \log_2N\rceil$. As a concrete example, for $N = 768$, in our scheme, the number is less than 7680, while in puncturing and shortening polar coding schemes, the number is 10240. In this sense, the decoding latency~(as well as complexity) can be reduced by about 25$\%$.

\section{Partitioned SC-based Decoding}\label{section4}
\subsection{Decoding Sub-tree}\label{section4subsection1}
\begin{figure}[t]
	\vskip-1.5cm
	\centering
	\includegraphics[width=0.48\textwidth]{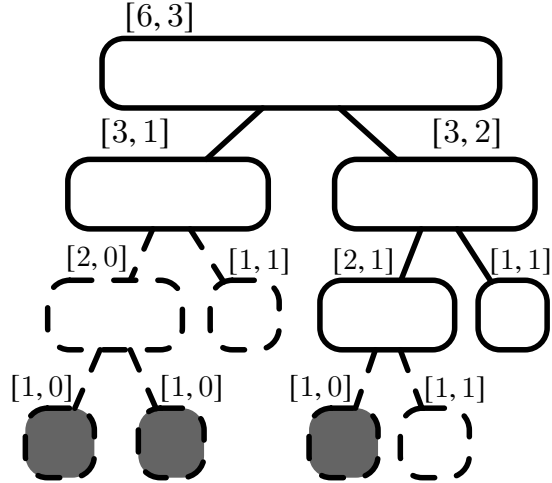}
	\caption{The decoding sub-tree~(solid line) for $N=6$ and $\tau=1$, which is extracted from the coding tree whose descendants involve 3 active leaf nodes and 3 frozen leaf nodes~(marked in gray). Here, each node is labeled by its associated code $[n, k]$.}\label{decoding_subtree}
	\vskip-0.5cm
\end{figure}

Recall that for the BBT polar codes, both the encoding and the SC-based decoding can be implemented over the coding tree defined in Section~\ref{section2subsection1}. However, the SC-based decoder makes hard decisions at leaf nodes in a serial manner, which results in high decoding latency when the code length is long. In the following, we define a decoding sub-tree and then present a low-latency SC-based decoding algorithm.

Now assume that a rate-profiling is given for a coding tree of length $N$. Each node in the coding tree is associated with a binary linear code, consisting of all the possible labels of the node. Specifically, for a tree node of length $n$ whose descendants involve $k$ active leaf nodes and $n-k$ frozen leaf nodes, the associated code has length $n$ and dimension $k$. For convenience, we say that such a node has a dimension $k$. When $k$ is small, it is feasible to compute all the codewords in the code. Let $\tau$~(a positive integer) be a dimension threshold. We first identify all the nodes which has dimension at most $\tau$ and whose parent node has dimension greater than $\tau$ in the coding tree and then extract a decoding sub-tree from the coding tree by removing all the descendants of these identified nodes. We refer to the leaves in the decoding sub-tree as the decoding leaves. An example of a decoding sub-tree for $N=6$ and $\tau=1$ is shown in Fig.~\ref{decoding_subtree}. We see that, the decoding sub-tree has two levels~(less than the original coding tree) and three decoding leaves.

Suppose that there are $q$ decoding leaves in the decoding sub-tree. Denote by $\mathcal{N}_{\rm{dec}}^{(t)}$ the $t$-th leaf from left to right in the decoding sub-tree and by $\mathscr{S}^{(t)}$ the code associated with $\mathcal{N}^{(t)}_{\rm{dec}}$, where $t=0,1,\ldots,q-1$. Then, a juxtaposition of the labels of the $q$ leaves, written as $\{\boldsymbol{v}^{(t)}\}_{t=0}^{q-1}\triangleq(\boldsymbol{v}^{(0)},\boldsymbol{v}^{(1)},\ldots,\boldsymbol{v}^{(q-1)})$, can be transformed recursively into a BBT polar codeword at the root, where $\boldsymbol{v}^{(t)}\in \mathscr{S}^{(t)}$ has length $\ell^{(t)}$. Note that the lengths $\ell^{(t)}(t=0,1,\ldots,q-1)$ may be different. 

When the LLRs of a decoding leaf $\mathcal{N}^{(t)}_{\rm{dec}}$, denoted by $\boldsymbol{\alpha}^{(t)}\triangleq (\alpha^{(t)}_0,\alpha^{(t)}_1,\ldots,\alpha^{(t)}_{\ell^{(t)}-1})$, are available, the partitioned SC~(PSC) decoder computes the likelihoods of all the codewords in $\mathscr{S}^{(t)}$ and then directly sets the HBEs of $\mathcal{N}^{(t)}_{\rm{dec}}$, denoted by $\boldsymbol{\beta}^{(t)}\triangleq(\beta^{(t)}_0,\beta^{(t)}_1,\ldots,\beta^{(t)}_{\ell^{(t)}-1})$,  to the most likely codeword. The partitioned SCL~(PSCL)  decoder performs in a similar way. Define a decoding path at decoding step $t$ as a juxtaposition of the HBEs of the previous $t$ decoding leaves, i.e., $\{\boldsymbol{\beta}^{(i)}\}_{i=0}^{t-1}\triangleq(\boldsymbol{\beta}^{(0)},\boldsymbol{\beta}^{(1)},\ldots,\boldsymbol{\beta}^{(t-1)})$. In the PSCL decoding, at most $L$ decoding paths are retained. At the $(t+1)$-th decoding step, the LLRs of the leaf $\mathcal{N}^{(t)}_{\rm{dec}}$ are available. The PSCL decoder splits each path $\{\boldsymbol{\beta}^{(i)}\}_{i=0}^{t-1}$ into at most $2^{\tau}$ paths by setting $\boldsymbol{\beta}^{(t)}$ to all the codewords in $\mathscr{S}^{(t)}$, and for each extended path $\{\boldsymbol{\beta}^{(i)}\}_{i=0}^{t}$, computes its path metric as 
\begin{align}
{\rm PM}_{t}=\sum_{i=0}^{t}\sum_{j=0}^{\ell^{(i)}-1}\ln(1+\exp(-(1-2\beta^{(i)}_{j})\alpha^{(i)}_{j})).
\end{align}
Then, the $L$ paths with the minimum path metrics are retained, while the other paths are discarded. Eventually, the path with the minimum path metric is selected as the decoder output. 
\subsection{Performance Analysis}\label{section4subsection2}
Without loss of generality, we assume that the all-zero codeword is transmitted over a BPSK-AWGN channel, resulting in a received vector. In this case, all the labels of the decoding leaves are zero vectors.

Similar to~\cite[Lemma~1]{Mori2009ISIT}, it can be easily verified that the performance of the PSC decoding is identical to that of the genie-aided PSC decoding where the decoder knows the correct previous decoding leaves when the current decoding leaf is visited, and the error event under the genie-aided PSC decoding is a union of the events that the HBEs of the decoding leaves are incorrect. Then, for a decoding sub-tree having $q$ leaves $\mathcal{N}_{\rm{dec}}^{(t)}$, $t=0,1,\ldots,q-1$, the FER of the PSC decoding is given by
\begin{align}
{\rm FER }=\Pr\big\{\bigcup_{t=0}^{q-1}E^{(t)}\big\},
\end{align}
where $E^{(t)}$ denotes the error event for $\mathcal{N}_{\rm{dec}}^{(t)}$ under the genie-aided PSC decoding, and which can be upper bounded as
\begin{align}\label{UBFER}
{\rm FER }\leq\sum_{t=0}^{q-1}\Pr\{E^{(t)}\},
\end{align}
and lower bounded as
\begin{align}\label{LBFER}
{\rm FER }\geq\max_{0\leq t\leq q-1}\Pr\{E^{(t)}\}.
\end{align}

To analyze the ML decoding error probability $\Pr\{E^{(t)}\}$ of a decoding leaf $\mathcal{N}_{\rm{dec}}^{(t)}$, we first define its ``channel'' transition probability as
\begin{align}
P(\boldsymbol{y},\{\boldsymbol{v}^{(i)}\}_{i=0}^{t-1}|\boldsymbol{v}^{(t)})\triangleq\sum_{\{\boldsymbol{v}^{(i)}\}_{i=t+1}^{q-1}\in\{\mathbb{F}_2^{\ell^{(i)}}\}_{i=t+1}^{q-1}}\frac{1}{2^{N-\ell^{(t)}}}\Pr\{\boldsymbol{y}|\{\boldsymbol{v}^{(i)}\}_{i=0}^{q-1}\},
\end{align}
where $\boldsymbol{y}\in\mathbb{R}^N$ and $\boldsymbol{v}^{(i)}\in\mathbb{F}_2^{\ell^{(i)}},i=0,1,\ldots,q-1$. Recall that $\mathcal{N}_{\rm{dec}}^{(t)}$ is associated with the code $\mathscr{S}^{(t)}$. Let $\boldsymbol{v}_{0}^{(t)}$ be the all-zero codeword in $\mathscr{S}^{(t)}$ and $\boldsymbol{v}_{i}^{(t)},i=1,2,\ldots,|\mathscr{S}^{(t)}|-1$, be the non-zero codewords in $\mathscr{S}^{(t)}$. The Hamming weight of $\boldsymbol{v}_{i}^{(t)}$ is denoted by $w_{\rm H}(\boldsymbol{v}_{i}^{(t)})$. Define by $E^{(t)}_{0\rightarrow i}$ the event that $\boldsymbol{v}_{i}^{(t)}$ is not less likely than $\boldsymbol{v}_{0}^{(t)}$, i.e.,
\begin{equation}
	E^{(t)}_{0\rightarrow i}\triangleq
	\{\boldsymbol{y} \in \mathbb{R}^N \mid P(\boldsymbol{y},\{\mathbf{0}^{(i)}\}_{i=0}^{t-1} \mid \boldsymbol{v}_{i}^{(t)}) \geq P(\boldsymbol{y},\{\mathbf{0}^{(i)}\}_{i=0}^{t-1} \mid \boldsymbol{v}_{0}^{(t)})\}.
\end{equation}
Based on the Bonferroni inequalities~\cite{Hoppe1985}, we have
\begin{equation}\label{1stBonferroniInequality}
	\Pr\{E^{(t)}\}\leq\sum_{1\leq i \leq |\mathscr{S}^{(t)}|-1}\Pr\{E^{(t)}_{0\rightarrow i}\},
\end{equation}
and
\begin{equation}\label{2ndBonferroniInequality}
	\Pr\{E^{(t)}\}\geq\sum_{1\leq i \leq |\mathscr{S}^{(t)}|-1}\Pr\{E^{(t)}_{0\rightarrow i}\}-\sum_{1\leq i < j \leq |\mathscr{S}^{(t)}|-1 }\Pr\{E^{(t)}_{0\rightarrow i}\cap E^{(t)}_{0\rightarrow j}\}.
\end{equation}

Since the BPSK-AWGN channel is memoryless, the transition probability associated with $\mathcal{N}_{\rm{dec}}^{(t)}$ can be written as a product of the transition probabilities associated with the components of $\mathcal{N}_{\rm{dec}}^{(t)}$. Denote by $P(\boldsymbol{y},\{\boldsymbol{v}^{(i)}\}_{i=0}^{t-1}|v_j^{(t)})$ the transition probability associated with the $j$-th component of $\mathcal{N}_{\rm{dec}}^{(t)}$. We have
\begin{align}
P(\boldsymbol{y},\{\boldsymbol{v}^{(i)}\}_{i=0}^{t-1}|\boldsymbol{v}^{(t)})=\prod_{j=0}^{\ell^{(t)}-1}P(\boldsymbol{y},\{\boldsymbol{v}^{(i)}\}_{i=0}^{t-1}|v_j^{(t)}).
\end{align} 
That is, the vector channel associated with $\mathcal{N}_{\rm{dec}}^{(t)}$ can be split into $\ell^{(t)}$ different and independent component channels. Note that the ML decoding performance over the vector channel is worse than that over $\ell^{(t)}$ uses of the best component channel and better than that over $\ell^{(t)}$ uses of the worst component channel. Then, we present two upper bounds according to the worst component channel and a lower bound according to the best component channel as below.

By using the GA method~\cite{Trifonov2012}, we can estimate the noise variance of the worst component channel, denoted by $(\sigma_{\rm worst}^{(t)})^2$, and that of the best component channel, denoted by $(\sigma_{\rm best}^{(t)})^2$. Then, according to~\cite{Sason2006}, an upper bound on the error probability $\Pr\{E^{(t)}\}$ can be obtained by 
\begin{equation}\label{ub1}
\Pr\{E^{(t)}\}\leq\sum_{1\leq i \leq |\mathscr{S}^{(t)}|-1}Q(\sqrt{w_{\rm H}(\boldsymbol{v}_{i}^{(t)})}\bigg/\sigma_{\rm worst}^{(t)}),
\end{equation}
and a lower bound on the error probability $\Pr\{E^{(t)}\}$ can be obtained by 
\begin{align}\label{lb}
\Pr\{E^{(t)}\}
\notag \geq&\sum_{1\leq i \leq |\mathscr{S}^{(t)}|-1}Q(\sqrt{w_{\rm H}(\boldsymbol{v}_{i}^{(t)})}\bigg/\sigma_{\rm best}^{(t)})\\
&~~~-\sum_{1\leq i < j \leq |\mathscr{S}^{(t)}|-1 }\psi(\rho_{ij},\sqrt{w_{\rm H}(\boldsymbol{v}_{i}^{(t)})}\bigg/\sigma_{\rm best}^{(t)},\sqrt{w_{\rm H}(\boldsymbol{v}_{j}^{(t)})}\bigg/\sigma_{\rm best}^{(t)}),
\end{align}
where 
\begin{equation}
	Q(x) \triangleq \frac{1}{\sqrt{2 \pi}} \int_x^{\infty} e^{-t^2 / 2} \mathrm{~d} t,
\end{equation}
\begin{align}
	\psi(\rho,x, y)= 
	\notag& \frac{1}{2 \pi} \int_0^{\frac{\pi}{2}-\tan ^{-1}\left(\frac{y}{x}\right)} \frac{\sqrt{1-\rho }}{1-\rho \sin 2 \theta} \exp \left[-\frac{x^2}{2} \frac{1-\rho \sin 2 \theta}{\left(1-\rho^2\right) \sin ^2 \theta}\right] {\rm d} \theta\\
	&+ \frac{1}{2 \pi} \int_0^{\tan ^{-1}\left(\frac{y}{x}\right)} \frac{\sqrt{1-\rho}}{1-\rho \sin 2 \theta} \exp \left[-\frac{y^2}{2} \frac{1-\rho \sin 2 \theta}{\left(1-\rho^2\right) \sin ^2 \theta}\right] {\rm d} \theta
\end{align}
and
\begin{equation}
	\rho_{i j}=\frac{w_{\rm H}(\boldsymbol{v}_{i}^{(t)})+w_{\rm H}(\boldsymbol{v}_{j}^{(t)})-w_{\rm H}(\boldsymbol{v}_{i}^{(t)}-\boldsymbol{v}_{j}^{(t)})}{2 \sqrt{w_{\rm H}(\boldsymbol{v}_{i}^{(t)}) w_{\rm H}(\boldsymbol{v}_{j}^{(t)})}}.
\end{equation}
In addition, by recursively computing the Bhattacharyya bounds~\cite{Arikan2009}, we can obtain the Bhattacharyya upper bound of the worst component channel, denoted by $Z_{\rm worst}^{(t)}$. The error probability $\Pr\{E^{(t)}\}$ can also be upper bounded by 
\begin{equation}\label{ub2}
\Pr\{E^{(t)}\}\leq\sum_{1\leq i \leq |\mathscr{S}^{(t)}|-1}(Z_{\rm worst}^{(t)})^{w_{\rm H}(\boldsymbol{v}_{i}^{(t)})}.
\end{equation}

Finally, by substituting (\ref{ub1}),~(\ref{ub2}) and (\ref{lb}) into (\ref{UBFER}) and (\ref{LBFER}), we can derive the following bounds on the FER of the BBT polar code with the PSC decoding.
\begin{enumerate}
\item The upper bound based on GA~(G-UB):
\begin{equation}\label{GUB}
	{\rm FER }\leq\sum_{t=0}^{q-1}\sum_{1\leq i \leq |\mathscr{S}^{(t)}|-1}Q(\sqrt{w_{\rm H}(\boldsymbol{v}_{i}^{(t)})}\bigg/\sigma_{\rm worst}^{(t)}).
\end{equation}
\item The upper bound based on Bhattacharyya bounds~(B-UB):
\begin{equation}\label{BUB}
	{\rm FER }\leq\sum_{t=0}^{q-1}\sum_{1\leq i \leq |\mathscr{S}^{(t)}|-1}(Z_{\rm worst}^{(t)})^{w_{\rm H}(\boldsymbol{v}_{i}^{(t)})}.
\end{equation}
\item The lower bound~(LB):
\begin{align}\label{LB}
	{\rm FER }\geq\max_{0\leq t\leq q-1}
\{
\notag&\sum_{1\leq i \leq |\mathscr{S}^{(t)}|-1}Q(\sqrt{w_{\rm H}(\boldsymbol{v}_{i}^{(t)})}\bigg/\sigma_{\rm best}^{(t)})\\
&~~~-\sum_{1\leq i < j \leq |\mathscr{S}^{(t)}|-1 }\psi(\rho_{ij},\sqrt{w_{\rm H}(\boldsymbol{v}_{i}^{(t)})}\bigg/\sigma_{\rm best}^{(t)},\sqrt{w_{\rm H}(\boldsymbol{v}_{j}^{(t)})}\bigg/\sigma_{\rm best}^{(t)})\}.
\end{align}
\end{enumerate}

\subsection{Performance Evaluation and Latency Analysis}\label{section4subsection3}
\subsubsection{Decoding performance}
\begin{figure}[t]  
	\vskip-2cm
	\centering
	\includegraphics[width=0.68\textwidth]{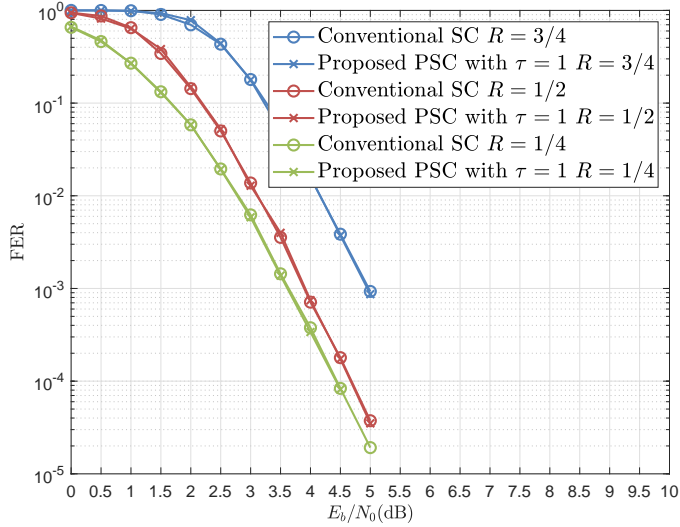}
	\caption{Performance comparisons between the proposed PSC decoding with $\tau=1$ and the conventional SC decoding. Here, $N=384$ and $R\in\{1/4,1/2,3/4\}$.}\label{CSCvsPSC_N384}
\end{figure}
\begin{figure}[t]  
	\vskip-2cm
	\centering
	\includegraphics[width=0.68\textwidth]{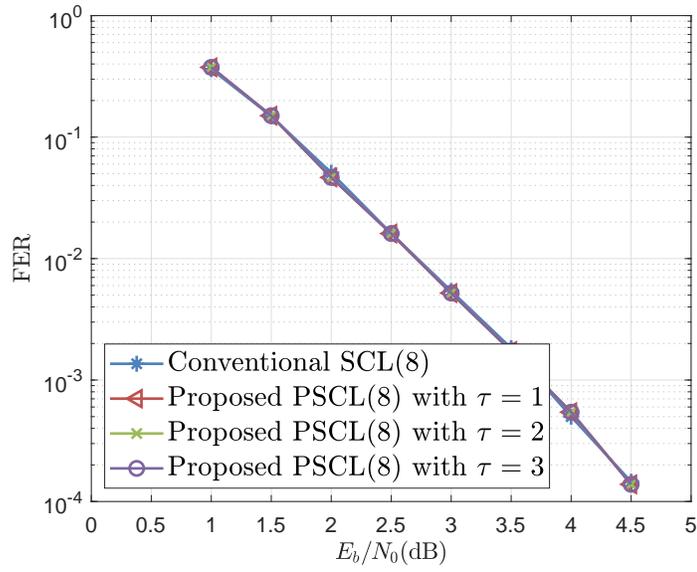}
	\vskip-0.5cm
	\caption{Performance comparisons among the proposed PSCL decoding with $\tau\in\{1,2,3\}$ and the conventional SCL decoding, where the list size is 8. Here, $N=384$ and $K=192$.}\label{CSCL8vsPSCL8_N384K192}
\end{figure}
\begin{figure}[t]  
	\vskip-1cm
	\centering
	\includegraphics[width=0.68\textwidth]{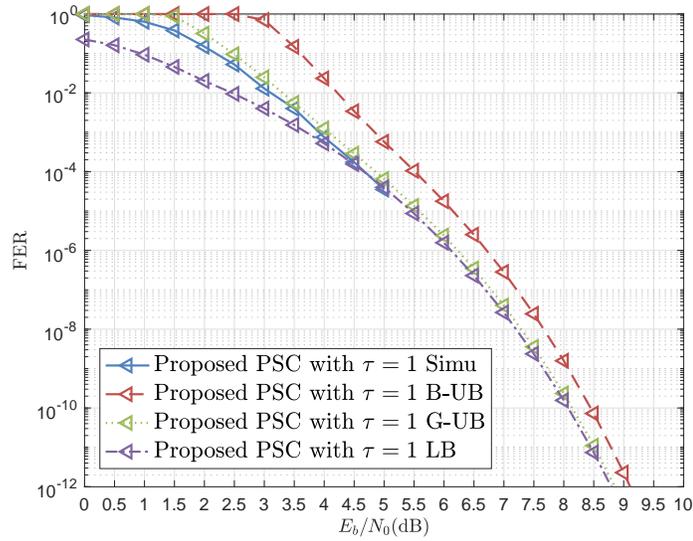}
	\vskip-0.5cm
	\caption{Performance bounds of the proposed PSC decoding with $\tau=1$. Here, $N=384$ and $R=1/2$.}\label{CSCvsPSCvsPSCUB1vsPSCUB2vsPSCLB_N384K192}
	\vskip-1cm
\end{figure}
We conduct simulations for the BBT polar codes constructed by the PW method introduced in Subsection~\ref{section3subsection3} over a BPSK-AWGN channel. For $N=384$ and $R\in\{1/4,1/2,3/4\}$, the performance comparisons between the proposed PSC decoding with $\tau=1$ and the conventional SC decoding are shown in Fig.~\ref{CSCvsPSC_N384}. For $N=384$ and $R=1/2$, the performance comparisons between the proposed PSCL(8) decoding with $\tau\in\{1,2,3\}$ and the conventional SCL(8) decoding are shown in Fig.~\ref{CSCL8vsPSCL8_N384K192}. We see that the proposed PSC/PSCL(8) decoding can have the same performance as the conventional SC/SCL(8) decoding.  However, the PSC-based decoding avoids a complete traversal of a coding tree and can reduce the decoding latency. The G-UB~(\ref{GUB}), B-UB~(\ref{BUB}) and LB~(\ref{LB}) of the PSC decoding with $\tau=1$ are shown in Fig.~\ref{CSCvsPSCvsPSCUB1vsPSCUB2vsPSCLB_N384K192}. We see that, for $\tau=1$, the G-UB is close to the LB in the high SNR region. We also see that the B-UB is looser than the G-UB. However, calculating the B-UBs is simpler.
\begin{table}[t]
	\centering
	\caption{The numbers of LLR calculations for $N=384$}
	\begin{tabular}{llll}
		\hline
		R & 1/4 & 1/2 & 3/4 \\ \hline
		SC & 3328 & 3328 & 3328 \\ 
		PSC, $\tau=1$ & 1965 & 2586 & 3023 \\ 
		PSC, $\tau=2$ & 1674 & 2322 & 2778 \\ 
		PSC, $\tau=3$ & 1602 & 2148 & 2490 \\ \hline
	\end{tabular}
	\label{The_number_of_LLR_calculations}
\end{table}

\subsubsection{Decoding complexity and latency}
We compare the decoding latency of the proposed PSC decoder and the conventional SC decoder by examining the number of LLR calculations defined in~(\ref{Equation4}) and (\ref{Equation5}). Table~\ref{The_number_of_LLR_calculations} shows the numbers of the LLR calculations required to decode BBT polar codes with $N=384$ and $R\in\{1/4,1/2,3/4\}$ for decoding. We see that the PSC decoder requires less LLR calculations than the conventional SC decoder. Moreover, for the PSC decoding, the larger the dimension threshold $\tau$ is, the less the number of LLR calculations is.

\section{Conclusion}\label{section5}
In this paper, we have defined a coding tree and then proposed a length-flexible polar coding scheme. For code construction, we have presented an SNR-dependent and two SNR-independent methods. Numerical results show that the error-rate performance of the proposed BBT polar codes is comparable to that of the 5G polar codes. Moreover, in our scheme, the SC-based decoding can be implemented without increasing the decoding latency. Further, we have also proposed the partitioned SC-based decoding, which can be implemented over the decoding sub-tree. Especially, we have derived three bounds to evaluate the performance of the PSC decoding. Numerical results show that the proposed PSC-based decoding performs as well as the conventional SC-based decoding but has less complexity and low latency.
\ifCLASSOPTIONcaptionsoff
\newpage
\fi
\bibliographystyle{IEEEtran}
\bibliography{IEEEabrv,RefYaoxym}
\end{document}